\title{Natural Deduction and Normalization Proofs for the Intersection Type Discipline}
\author{Federico Aschieri\thanks{Funded by FWF START project Y544--N23 and FWF P 32080--N31}\\
Institut f\"ur Logic and Computation\\ Technische Universit\"at Wien}
\newcommand{\comment}[1]{}
\newcommand{\DO}                       { {\mathsf{D\Omega}} }
\newcommand{\D}                       { {\mathsf{D}} }
\newtheorem{theorem}{Theorem}
\newtheorem{lemma}[theorem]{Lemma}
\newtheorem{definition}{Definition}
\newtheorem{proposition}{Proposition}
\begin{document}
\maketitle

\begin{abstract}
Refining and extending previous work by Retor\'e \cite{Retore}, we develop a systematic approach to intersection types via natural deduction. We show how a step of beta reduction can be seen as performing, at the level of typing derivations, Prawitz reductions in parallel. Then we derive as immediate consequences of Subject Reduction the main theorems about normalization for intersection types: for system $\D$, strong normalization, for system $\DO$, the leftmost reduction termination for terms typable without $\Omega$.
\end{abstract}

\section{Introduction}

One of the most remarkable properties of the intersection type systems $\DO$ and $\D$ \cite{CD} is that they characterize the normalizable and strongly normalizable terms of  $\lambda$-calculus. In turn, this characterization allows to prove in a logically grounded and elegant way several fundamental theorems about $\lambda$-calculus, like the \emph{uniqueness of normal forms} and the \emph{termination of the leftmost redex reduction} for normalizable terms \cite{Krivine}.  Unfortunately, since they exploit normalization for $\DO$, the first intersection-types-based proofs of these results employed the Tait reducibility technique \cite{Krivine}, very well known for not conveying any combinatorial  information and for  its logical complexity. Using reducibility to prove elementary theorems about $\lambda$-calculus is definitely an overkill and the resulting proofs are so indirect that are barely comprehensible. For these reasons, we are interested here in giving an elementary, direct, conceptually elegant proof of normalization for $\DO$, which may be used in elementary introductions to typed and untyped $\lambda$-calculus. With our approach, it will  turn out, the most important elementary results of $\lambda$-calculus can be proved as corollaries of the normalization theorem for simply typed $\lambda$-calculus.

The first elementary, arithmetical proof of strong normalization for system $\D$ was provided by Retor\'e \cite{Retore}; then several others followed (e.g. \cite{Bucciarelli, Valentini, David, Aschieri}). The beauty of Retor\'e's approach is that one sees that actually\dots there is nothing to prove. If one moves to a Prawitz-style natural deduction presentation of intersection types, instead of sticking to Gentzen-style natural deduction, as it is traditionally done \cite{Krivine}, then strong normalization becomes just consequence of Subject Reduction and normalization for natural deduction. The reason is that Gentzen-style natural deduction is based on sequents and, as a typing system, uses explicit contexts. As a result, the proof reductions are quite cumbersome to write and nowhere near the elegance achievable using Prawitz natural deduction trees. In fact, the usual proofs of Subject Reduction \cite{Krivine} provide no direct transformation of the typing derivation of a term into the typing derivations of the reducts. Writing the transformation explicitly, indeed, would be ugly. But without doing that, one misses the logical perspective on intersection types.

Retor\'e treated directly only strict intersection types, that is, types not allowing conjunction on the right of implications. This limitation is not present in  \cite{RonchiPimentel}, 
but the approach is not as direct and simple as Retor\'e's. Indeed,  strong normalization for $\D$ is deduced from strong normalization of a more complex logical system, where judgements are sequences of sequents, like in hypersequent calculi.  Since each proof of this system can be translated as a sequence of parallel natural deductions, its strong normalization can be derived.

 The goal of this paper is to address these limitations and extend the natural deduction approach to the typing system $\DO$, while refining Retor\'e's treatment of system $\D$ and removing the restriction to strict types. As corollaries, we shall obtain strong normalization for $\D$ and normalization for $\DO$.  A similar program was outlined in \cite{Dezani}, but never carried out in detail. Moreover, the  suggested normalization argument for $\DO$ is presented as standard cut-elimination technique, where one reduces at each step natural-deduction redexes of maximal complexity. This argument, however, is not suitable for proving termination of the leftmost reduction strategy for $\lambda$-calculus.  Achieving that requires subtle adjustments and carefully formulated inductive statements, and indeed none of the aforementioned natural-deduction-based works straightforwardly generalizes to $\DO$ so that it can accomplish our goals.

\section{Natural Deduction for Intersection Types}
In this section we define a natural deduction presentation of Coppo-Dezani intersection type systems $\D$ and $\DO$. These systems were invented with the aim of providing a logical characterization of strongly normalizable and normalizable $\lambda$-terms. Namely, the terms typable in $\D$ are exactly the strongly normalizable ones, while the terms typable in $\DO$ are precisely the normalizable ones.

We shall start by presenting the type inference rules, then we define a reduction relation on typing derivations, which consists in applying Prawitz reductions in parallel to the corresponding natural deduction.

\subsection{Typing Derivations}

A \textbf{typing tree} is a tree whose nodes are expressions of the form $t: A$, where $t$ is a $\lambda$-term and $A$ is a type built from type variables and $\top$, using the connectives $\rightarrow, \land$. A typing tree $\mathcal{D}$ with root $t: A$ will be denoted as  
$\AxiomC{$\mathcal{D}$}
\noLine
\UnaryInfC{$t: A$}
\DisplayProof$. With $\AxiomC{$x: A$}
\noLine
\UnaryInfC{$\mathcal{D}$}
\DisplayProof$, we denote a typing tree such that for  all leaves $x: A$ and $x: B$, we have $A=B$. 

A \textbf{typing  derivation} in system $\DO$ is a  typing tree obtained by means of the following inference rules.

$$
\AxiomC{}
\UnaryInfC{$x: A$}
\DisplayProof
\qquad
\AxiomC{}
\UnaryInfC{$t: \top$}
\DisplayProof
$$
$$
\AxiomC{$x: A$}
\noLine
\UnaryInfC{$\mathcal{D}$}
\noLine
\UnaryInfC{$u:B$}
\UnaryInfC{$\lambda x\, u: A\rightarrow B$}
\DisplayProof
\qquad
\AxiomC{$t: A\rightarrow B$}
\AxiomC{$u: A$}
\BinaryInfC{$t\, u: B$}
\DisplayProof
$$

\begin{prooftree}
\AxiomC{$t: A$}
\AxiomC{$t: B$}
\BinaryInfC{$t: A\land B$}
\end{prooftree}
$$
\AxiomC{$t: A\land B$}
\UnaryInfC{$t: A$}
\DisplayProof
\qquad
\AxiomC{$t: A\land B$}
\UnaryInfC{$t: B$}
\DisplayProof
$$
We observe that the only inference rule that requires a condition to be applied is the $\rightarrow$-introduction rule. One can only conclude $\lambda x\, u: A\rightarrow B$ when there is a typing derivation $\mathcal{D}$ of $u: B$ such that all the occurrences of $x$ are declared of the same type $A$. In general we allow variables to be declared of multiple types in the leaves of typing derivations, as polymorphism is the essence of intersection types. There is no special technical reason: we could very well have restricted variables to have unique types, but since there is no gain in doing this, we avoid the restriction.

By construction, if $x_{1}: A_{1}, \dots, x_{n}: A_{n}$, are all the leaves of  a typing derivation $\mathcal{D}$ of $t: A$ such that $x_{1}, \dots, x_{n}$ are free variables of $t$, then $\mathcal{D}$ is isomorphic to a natural deduction of $A$ from assumptions    $ A_{1}, \dots, A_{n}$; such a   $\mathcal{D}$ will be called \textbf{a typing derivation of $t: A$ from $x_{1}: A_{1}, \dots, x_{n}: A_{n}$}.    Not all natural deductions, of course, are isomorphic to  typing derivations: the $\land$-introduction can only be applied when the typing derivations of the premises type the \emph{same} term. 

As usual, $\D$ is the typing  system obtained from $\DO$ by dropping the rule $\AxiomC{}
\UnaryInfC{$t: \top$}
\DisplayProof
$. Moreover, we assume Barendregt's convention: in any context, free variables are always different from bound variables, so that there is no risk of capturing free variables when substituting terms for variables, as in $\beta$-reduction.

\subsection{Reduction Relation on Typing Derivations}

We now define the standard operation of derivation composition. With $\mathcal{D}[t/x]$ we shall denote the replacement of every occurrence of $x$ in $\mathcal{D}$ with the $\lambda$-term $t$. If $$
\AxiomC{$x: A$}
\noLine
\UnaryInfC{$\mathcal{D}$}
\noLine
\UnaryInfC{$u:B$}
\DisplayProof
\qquad
\AxiomC{$\mathcal{E}$}
\noLine
\UnaryInfC{$t: A$}
\DisplayProof$$
are two typing trees, the tree

$$\AxiomC{$\mathcal{E}$}
\noLine
\UnaryInfC{$t: A$}
\noLine
\UnaryInfC{$\mathcal{D}[t/x]$}
\noLine
\UnaryInfC{$u[t/x]:B$}
\DisplayProof
$$
denotes the typing tree obtained from $\mathcal{D}[t/x]$ by replacing each leaf $\AxiomC{}\UnaryInfC{$t: A$}\DisplayProof$ with the typing tree $\AxiomC{$\mathcal{E}$}
\noLine
\UnaryInfC{$t: A$}\DisplayProof$.
Indeed, this operation is a correct composition of typing derivations.

\begin{proposition}[Typing Derivation Composition]
Let $$
\AxiomC{$x: A$}
\noLine
\UnaryInfC{$\mathcal{D}$}
\noLine
\UnaryInfC{$u:B$}
\DisplayProof
\qquad
\AxiomC{$\mathcal{E}$}
\noLine
\UnaryInfC{$t: A$}
\DisplayProof$$
be two typing derivations. Then 
$$\AxiomC{$\mathcal{E}$}
\noLine
\UnaryInfC{$t: A$}
\noLine
\UnaryInfC{$\mathcal{D}[t/x]$}
\noLine
\UnaryInfC{$u[t/x]:B$}
\DisplayProof
$$
is a typing derivation
\end{proposition}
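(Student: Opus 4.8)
The plan is to prove this by structural induction on the typing derivation $\mathcal{D}$ of $u : B$ from the hypothesis $x : A$ (among possibly others). The statement to be established by induction is: whenever $\mathcal{D}$ derives $u : B$ using leaves $x : A$ (and $\mathcal{E}$ derives $t : A$), the tree $\mathcal{D}[t/x]$, with each such leaf replaced by a copy of $\mathcal{E}$, is a valid typing derivation of $u[t/x] : B$. The induction is on the last inference rule applied in $\mathcal{D}$, and the point of each case is simply to check that the side condition of that rule (there is only one, on $\rightarrow$-introduction) survives the substitution, and that the term decorations transform correctly under $[t/x]$.

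First I would handle the axiom cases. If $\mathcal{D}$ is the leaf $x : A$, then $\mathcal{D}[t/x]$ is by definition the tree $\mathcal{E}$, which derives $t : A = u[t/x] : B$ since here $u = x$, $B = A$. If $\mathcal{D}$ is a leaf $y : C$ with $y \neq x$, substitution does nothing and the leaf is still an axiom. If $\mathcal{D}$ is a leaf $u : \top$, then $u[t/x] : \top$ is again an instance of the $\top$-axiom. For the remaining cases the last rule has immediate subderivations, to which the induction hypothesis applies after observing that substitution commutes with the term-forming operations: for $\rightarrow$-elimination, $\mathcal{D}$ ends in premises $s : C \rightarrow B$ and $r : C$ with conclusion $s\,r : B$, and $(s\,r)[t/x] = s[t/x]\, r[t/x]$, so the same rule reassembles the substituted subderivations; the $\land$-introduction case uses that the two premises type the \emph{same} term $s$, hence type the same substituted term $s[t/x]$, so the side condition of $\land$-introduction is preserved; the $\land$-elimination cases are trivial.

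The main obstacle — and the only case with real content — is $\rightarrow$-introduction. Here $\mathcal{D}$ ends with a subderivation $\mathcal{D}'$ of $s : C$ all of whose leaves labelled $y$ have the same type $D$, concluding $\lambda y\, s : D \rightarrow C$, with $B = D \rightarrow C$ and $u = \lambda y\, s$. By Barendregt's convention $y \neq x$ and $y$ is not free in $t$, so $(\lambda y\, s)[t/x] = \lambda y\, (s[t/x])$. Applying the induction hypothesis to $\mathcal{D}'$ gives a typing derivation $\mathcal{D}'[t/x]$ of $s[t/x] : C$. To reapply $\rightarrow$-introduction I must check that in $\mathcal{D}'[t/x]$ all leaves labelled $y$ still carry the single type $D$: the leaves of $\mathcal{D}'[t/x]$ labelled $y$ are exactly the leaves of $\mathcal{D}'$ labelled $y$ (none are created by the substitution, since $y \notin \mathrm{FV}(t)$, and the copies of $\mathcal{E}$ spliced in contain no $y$-leaves), and their types are untouched by the substitution; so the side condition holds and $\rightarrow$-introduction yields $\lambda y\, (s[t/x]) : D \rightarrow C$, i.e. $u[t/x] : B$, as required. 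This completes the induction and hence the proposition.
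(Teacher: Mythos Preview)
Your proof is correct and follows exactly the approach the paper indicates: a straightforward structural induction on $\mathcal{D}$, which the paper states in a single line without spelling out the cases. Your detailed case analysis, in particular the verification that the side condition of $\rightarrow$-introduction is preserved under substitution via Barendregt's convention, is precisely the content hidden behind the paper's ``straightforward''.
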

\begin{proof}
By straightforward induction on $\mathcal{D}$.
\end{proof}

In Table \ref{tab:derred}, we axiomatically define a binary reduction relation $\rightsquigarrow$ on typing derivations.  The lefthand-side derivations of the first two reductions are called respectively $\rightarrow$-redexes and $\land$-redexes. A typing derivation is $\land$-\textbf{normal} if it does not contain $\land$-redexes and it is \textbf{normal} if it is not in relation $\rightsquigarrow$ with any typing derivation.

 From the logical point of view, the relation $\rightsquigarrow$ formalizes the operation of performing several Prawitz reduction steps in parallel on the natural deduction associated to the typing derivation. Namely, we interpret the $\land$-introduction rule as a parallel composition of derivations. We remark that for the $\rightarrow$-elimination rule we do not allow parallel reductions, but this is a minimalistic design choice, rather than a necessity. From the computational point of view, indeed, the relation $\rightsquigarrow$ is intended to formalize the exact amount of Prawitz reductions needed to type a single step of $\beta$-reduction.
 
In the following we shall need the well-known notion of weak head reduction over $\lambda$-terms.

\begin{definition}[Weak Head Reduction] For every $\lambda$-term $t$, we say that $t\mapsto t'$ by \textbf{weak head reduction} if 
$$t= (\lambda x\, u)\, v\, t_{1}\dots t_{n}$$
$$t'= u[v/x]\, t_{1}\dots t_{n}$$

\end{definition}

\begin{table}[h]
\hrule 
$$\AxiomC{$x: A$}
\noLine
\UnaryInfC{$\mathcal{D}$}
\noLine
\UnaryInfC{$u:B$}
\UnaryInfC{$\lambda x\, u: A\rightarrow B$}
\AxiomC{$\mathcal{E}$}
\noLine
\UnaryInfC{$t: A$}
\BinaryInfC{$(\lambda x\, u) t: B$}
\DisplayProof
\qquad
\rightsquigarrow
\qquad
\AxiomC{$\mathcal{E}$}
\noLine
\UnaryInfC{$t: A$}
\noLine
\UnaryInfC{$\mathcal{D}[t/x]$}
\noLine
\UnaryInfC{$u[t/x]:B$}
\DisplayProof
$$

$$
\AxiomC{$\mathcal{D}_{1}$}
\noLine
\UnaryInfC{$t: A_{1}$}
\AxiomC{$\mathcal{D}_{2}$}
\noLine
\UnaryInfC{$t: A_{2}$}
\BinaryInfC{$t: A_{1}\land A_{2}$}
\UnaryInfC{$t: A_{i}$}
\DisplayProof
\qquad
\rightsquigarrow
\qquad
\AxiomC{$\mathcal{D}_{i}$}
\noLine
\UnaryInfC{$t: A_{i}$}
\DisplayProof
$$
$$\AxiomC{$\mathcal{D}$}
\noLine
\UnaryInfC{$t: B$}
\DisplayProof
\rightsquigarrow
\AxiomC{$\mathcal{D'}$}
\noLine
\UnaryInfC{$t': B$}
\DisplayProof
\qquad
\Rightarrow 
\qquad 
\AxiomC{$\mathcal{D}$}
\noLine
\UnaryInfC{$t: B$}
\UnaryInfC{$\lambda x\, t: A\rightarrow B$}
\DisplayProof
\rightsquigarrow
\AxiomC{$\mathcal{D}'$}
\noLine
\UnaryInfC{$t': B$}
\UnaryInfC{$\lambda x\, t': A\rightarrow B$}
\DisplayProof
$$
$$\AxiomC{$\mathcal{D}$}
\noLine
\UnaryInfC{$t: A_{1}\land A_{2}$}
\DisplayProof
\rightsquigarrow
\AxiomC{$\mathcal{D'}$}
\noLine
\UnaryInfC{$t': A_{1}\land A_{2}$}
\DisplayProof
\qquad
\Rightarrow 
\qquad 
\AxiomC{$\mathcal{D}$}
\noLine
\UnaryInfC{$t: A_{1}\land A_{2}$}
\UnaryInfC{$t: A_{i}$}
\DisplayProof
\rightsquigarrow
\AxiomC{$\mathcal{D}'$}
\noLine
\UnaryInfC{$t': A_{1}\land A_{2}$}
\UnaryInfC{$ t': A_{i}$}
\DisplayProof
$$
$$\AxiomC{$\mathcal{D}$}
\noLine
\UnaryInfC{$t: A$}
\DisplayProof
\rightsquigarrow
\AxiomC{$\mathcal{D}'$}
\noLine
\UnaryInfC{$t': A$}
\DisplayProof,
\AxiomC{$\mathcal{E}$}
\noLine
\UnaryInfC{$t: B$}
\DisplayProof
\rightsquigarrow
\AxiomC{$\mathcal{E}'$}
\noLine
\UnaryInfC{$t': B$}
\DisplayProof
\qquad
\Rightarrow 
\qquad 
\AxiomC{$\mathcal{D}$}
\noLine
\UnaryInfC{$t: A$}
\AxiomC{$\mathcal{E}$}
\noLine
\UnaryInfC{$t: B$}
\BinaryInfC{$t: A\land B$}
\DisplayProof
\rightsquigarrow
\AxiomC{$\mathcal{D}'$}
\noLine
\UnaryInfC{$t': A$}
\AxiomC{$\mathcal{E}'$}
\noLine
\UnaryInfC{$t': B$}
\BinaryInfC{$t': A\land B$}
\DisplayProof
$$
$$\AxiomC{$\mathcal{D}$}
\noLine
\UnaryInfC{$t: A$}
\DisplayProof
\rightsquigarrow
\AxiomC{$\mathcal{D}'$}
\noLine
\UnaryInfC{$t': A$}
\DisplayProof
\qquad
\Rightarrow 
\qquad 
\AxiomC{$\mathcal{D}$}
\noLine
\UnaryInfC{$t: A\rightarrow B$}
\AxiomC{$\mathcal{E}$}
\noLine
\UnaryInfC{$u: A$}
\BinaryInfC{$t\, u: B$}
\DisplayProof
\rightsquigarrow
\AxiomC{$\mathcal{D}'$}
\noLine
\UnaryInfC{$t': A\rightarrow B$}
\AxiomC{$\mathcal{E}$}
\noLine
\UnaryInfC{$u: A$}
\BinaryInfC{$t'\, u: B$}
\DisplayProof
$$
$$\AxiomC{$\mathcal{E}$}
\noLine
\UnaryInfC{$u: A$}
\DisplayProof
\rightsquigarrow
\AxiomC{$\mathcal{E}'$}
\noLine
\UnaryInfC{$u': A$}
\DisplayProof
\qquad
\Rightarrow 
\qquad 
\AxiomC{$\mathcal{D}$}
\noLine
\UnaryInfC{$t: A\rightarrow B$}
\AxiomC{$\mathcal{E}$}
\noLine
\UnaryInfC{$u: A$}
\BinaryInfC{$t\, u: B$}
\DisplayProof
\rightsquigarrow
\AxiomC{$\mathcal{D}$}
\noLine
\UnaryInfC{$t: A\rightarrow B$}
\AxiomC{$\mathcal{E}'$}
\noLine
\UnaryInfC{$u': A$}
\BinaryInfC{$t\, u': B$}
\DisplayProof
$$
\hrule
\caption{Reduction relation on typing derivations}\label{tab:derred}
\end{table}
\section{Subject Reduction}

The goal of this section is to prove Subject Reduction for system $\D$. We instead prove later a Subjection Reduction for system $\DO$, because in $\DO$ we are only interested in the contraction of the leftmost redex.

Since the relation $\rightsquigarrow$ embodies a Prawitz-style transformation of natural deductions, it always terminates.

\begin{proposition}
The reduction relation $\rightsquigarrow$ is strongly normalizing.
\end{proposition}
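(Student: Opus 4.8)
The plan is to reduce the statement to the strong normalization of Prawitz-style detour elimination in natural deduction for the $\{\rightarrow,\land\}$-fragment (with the trivial constant $\top$), a classical result — equivalently, strong normalization of the simply typed $\lambda$-calculus with products.

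First I would make precise the forgetful map $\mathcal{D}\mapsto|\mathcal{D}|$ sending a typing derivation to the natural deduction obtained by erasing all $\lambda$-term annotations and keeping only types and inference rules; as already observed in the text, $|\mathcal{D}|$ is a genuine natural deduction of the $\{\rightarrow,\land,\top\}$-fragment, the occurrences of a bound variable $x$ in the premise of an $\rightarrow$-introduction being exactly the assumption-bundle discharged there. Under this map the $\rightarrow$-redexes and $\land$-redexes of Table~\ref{tab:derred} are literally the $\rightarrow$- and $\land$-detours, and their contraction is exactly the corresponding Prawitz conversion: note that the substitution $\mathcal{D}[t/x]$ performed in the $\rightarrow$-redex case does not alter the underlying tree of rules, so that plugging $\mathcal{E}$ into the leaves $t:A$ amounts precisely to substituting $|\mathcal{E}|$ for the discharged bundle; the Typing Derivation Composition proposition guarantees the result is still a typing derivation.

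Then comes the key lemma: \emph{if $\mathcal{D}\rightsquigarrow\mathcal{D}'$ then $|\mathcal{D}|$ rewrites to $|\mathcal{D}'|$ in one or more Prawitz conversion steps}, proved by a routine induction on the derivation of $\mathcal{D}\rightsquigarrow\mathcal{D}'$: the two base cases give exactly one step; the congruence cases for $\rightarrow$-introduction, $\land$-elimination, and the two for $\rightarrow$-elimination just propagate, via the induction hypothesis, the conversions performed inside a subderivation; and the parallel $\land$-introduction rule performs the conversions coming from \emph{both} premises, hence at least one. From the lemma, an infinite $\rightsquigarrow$-sequence $\mathcal{D}_{0}\rightsquigarrow\mathcal{D}_{1}\rightsquigarrow\cdots$ would induce an infinite sequence of Prawitz conversions starting from $|\mathcal{D}_{0}|$, contradicting strong normalization of detour elimination. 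I expect no real difficulty; the only points deserving care are the exact statement of the typing-derivation/natural-deduction correspondence — in particular that $\mathcal{D}[t/x]$ leaves the rule-tree unchanged — and the verification that every $\rightsquigarrow$-step, including one hidden under a parallel $\land$-introduction, projects to a nonempty sequence of conversions.
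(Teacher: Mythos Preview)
Your proposal is correct and follows essentially the same route as the paper: erase the $\lambda$-term annotations to obtain a Prawitz natural deduction, show that each $\rightsquigarrow$-step projects to at least one Prawitz conversion, and conclude by strong normalization of the $\{\rightarrow,\land\}$-fragment. You are in fact more explicit than the paper about the ``at least one step'' verification, in particular for the parallel $\land$-introduction congruence, which is exactly the point that deserves care.
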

\begin{proof}
It is straightforward to prove, by induction on $\mathcal{D}$, that if $\mathcal{D}\rightsquigarrow\mathcal{D'}$, the natural deduction corresponding to  $\mathcal{D}$ reduces in a certain number of steps to the natural deduction corresponding to  $\mathcal{D}'$, using the standard Prawitz reductions for $\rightarrow$ and $\land$. Therefore,  the relation $\rightsquigarrow$  produces no infinite reduction path.
\end{proof}

Eliminating the useless $\land$-redex from typing derivations restores an important property of the simply typed $\lambda$-calculus: the only way to type a function with arrow type is by a $\rightarrow$-introduction.
\begin{proposition}[Introduce!]\label{prop:introduce}
Suppose $\mathcal{D}$ is a $\land$-normal typing derivation of $\lambda x\, u:  T\neq \top$ whose last rule is not an $\land$-introduction. Then 
$$\mathcal{D}=
\AxiomC{$x: A$}
\noLine
\UnaryInfC{$\mathcal{D}'$}
\noLine
\UnaryInfC{$u:B$}
\UnaryInfC{$\lambda x\, u: A\rightarrow B$}
\DisplayProof$$
with $T=A\rightarrow B$.


\end{proposition}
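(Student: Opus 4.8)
The plan is to induct on the structure of the $\land$-normal typing derivation $\mathcal{D}$ of $\lambda x\, u : T$ with $T \neq \top$, under the hypothesis that its last rule is not an $\land$-introduction. First I would enumerate the possible last rules. Since the subject of the conclusion is an abstraction $\lambda x\, u$, the last rule cannot be an $\rightarrow$-elimination (that would force the subject to be an application $t\,v$) and cannot be the $\top$-axiom (that would force $T = \top$, excluded). By hypothesis it is not an $\land$-introduction. The variable axiom $x : A$ is excluded because its subject is a variable, not an abstraction (recall Barendregt's convention keeps bound and free variables distinct, so $\lambda x\, u$ is never literally a variable). This leaves exactly three cases: the $\rightarrow$-introduction rule, and the two $\land$-elimination rules.

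The $\rightarrow$-introduction case is immediate: the rule has precisely the displayed shape, with $T = A \rightarrow B$, so $\mathcal{D}$ is already of the required form with $\mathcal{D}' $ the immediate subderivation of $u : B$. The two $\land$-elimination cases are where the $\land$-normality hypothesis does the work: if the last rule is, say, $\land$-elimination deriving $\lambda x\, u : T$ from a premise $\lambda x\, u : T \land A'$ (or $A' \land T$), then the derivation of that premise, being a subderivation of a $\land$-normal derivation, is itself $\land$-normal, and its last rule cannot be an $\land$-introduction — because an $\land$-introduction immediately above an $\land$-elimination is exactly an $\land$-redex, contradicting $\land$-normality of $\mathcal{D}$. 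Moreover $T \land A' \neq \top$ since $\top$ is a type variable-style atom, not a conjunction. So the induction hypothesis applies to the immediate subderivation, yielding that it is an $\rightarrow$-introduction concluding $\lambda x\, u : C \rightarrow D$ with $C \rightarrow D = T \land A'$ — which is absurd, since an arrow type is not a conjunction. Hence these two cases cannot occur at all.

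The main obstacle, such as it is, is purely bookkeeping: being careful that the induction hypothesis is genuinely applicable in the $\land$-elimination cases, i.e. that the relevant subderivation still satisfies both the $\land$-normality requirement and the "last rule not $\land$-introduction" requirement, and that one correctly derives the contradiction from the shape mismatch (arrow versus conjunction). One should also double-check that $\land$-normality of $\mathcal{D}$ is inherited by all subderivations — which is immediate from the definition, since a subderivation containing an $\land$-redex would make $\mathcal{D}$ contain one too. With these observations in place the proof is a short case analysis with a one-step induction, and no combinatorial estimates are needed.
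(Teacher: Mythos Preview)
Your proposal is correct and follows essentially the same approach as the paper's proof: induction on $\mathcal{D}$, ruling out the axiom and $\rightarrow$-elimination cases by the shape of the subject and the hypothesis $T\neq\top$, taking the $\rightarrow$-introduction case as the thesis, and in the $\land$-elimination case applying the induction hypothesis to the immediate subderivation (which is $\land$-normal and cannot end in $\land$-introduction by $\land$-normality of $\mathcal{D}$) to obtain the contradiction that a conjunction equals an arrow type. The paper is slightly terser---it leaves the final type mismatch implicit---but the argument is the same.
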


\begin{proof}
We proceed by induction on $\mathcal{D}$ and by cases according to the last rule of $\mathcal{D}$. We observe that the last rule cannot be a leaf nor an $\rightarrow$ elimination, because the conclusion of $\mathcal{D}$ is not a variable nor an application and $T\neq \top$. Therefore, only two rules can be applied:
\begin{itemize}
\item The last rule of $\mathcal{D}$ is a $\rightarrow$-introduction. This is the thesis.

\item 
$\mathcal{D}=
\AxiomC{$\mathcal{E}$}
\noLine
\UnaryInfC{$\lambda x\, u: B_{1}\land B_{2}$}
\UnaryInfC{$\lambda x\, u: B_{i}$}
\DisplayProof$, with $T=B_i$. We show that this case is impossibile. Since $\mathcal{D}$ is $\land$-normal, the last rule of $\mathcal{E}$ is not an $\land$-introduction. But by induction hypothesis the last rule of $\mathcal{E}$ must be an $\rightarrow$-introduction, which is a contradiction.
\end{itemize}
\end{proof}

We can now prove Subject Reduction in the usual way. The proof makes explicit the transformations that are implicit in the usual presentations of intersection types. 
\begin{theorem}[Subject Reduction for $\D$]\label{thm:subject}
Suppose $\mathcal{D}$ is a typing derivation of  $t: A$ in $\D$. Then
 $$t\mapsto t'
 \qquad \Rightarrow\qquad \AxiomC{$\mathcal{D}$}
\noLine
\UnaryInfC{$t: A$}
\DisplayProof
\rightsquigarrow^{+}
\AxiomC{$\mathcal{D}'$}
\noLine
\UnaryInfC{$t': A$}
\DisplayProof
$$
\begin{proof}
By straightforward induction on $\mathcal{D}$.
\end{proof}

\end{theorem}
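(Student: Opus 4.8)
My plan is to argue by induction on $\mathcal{D}$ with a case analysis on its last rule, writing $t=(\lambda x\, u)\, v\, t_{1}\dots t_{n}$ and $t'=u[v/x]\, t_{1}\dots t_{n}$. Since $t$ is an application, the last rule of $\mathcal{D}$ is neither an axiom (its conclusion would be a variable) nor an $\rightarrow$-introduction (its conclusion would be an abstraction), so only $\rightarrow$-elimination, $\land$-elimination and $\land$-introduction remain; in each of these the recursion goes into proper subderivations still typing a weak-head-reducible term, so the induction is well founded.

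The two congruence cases are routine. If $\mathcal{D}$ ends with an $\land$-elimination applied to a subderivation $\mathcal{E}$ of $t: A_{1}\land A_{2}$, the induction hypothesis gives $\mathcal{E}\rightsquigarrow^{+}\mathcal{E}'$ with $\mathcal{E}'$ a derivation of $t': A_{1}\land A_{2}$, and iterating the $\land$-elimination congruence rule of Table~\ref{tab:derred} lifts this to $\mathcal{D}\rightsquigarrow^{+}\mathcal{D}'$. If $\mathcal{D}$ ends with an $\rightarrow$-elimination with premises $\mathcal{D}_{1}$ of $f: C\rightarrow A$ and $\mathcal{D}_{2}$ of $a: C$ and conclusion $f\, a: A$, and $n\geq 1$, then $a=t_{n}$ while $f=(\lambda x\, u)\, v\, t_{1}\dots t_{n-1}$ weak-head-reduces; the induction hypothesis applies to $\mathcal{D}_{1}$, and the congruence rule for the function part of an application yields $\mathcal{D}\rightsquigarrow^{+}\mathcal{D}'$ with $\mathcal{D}'$ of $t': A$.

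The remaining $\rightarrow$-case is $\rightarrow$-elimination with $n=0$: $\mathcal{D}$ ends with an $\rightarrow$-elimination whose premises are $\mathcal{D}_{1}$ of $\lambda x\, u: A\rightarrow B$ and $\mathcal{D}_{2}$ of $v: A$, and $t'=u[v/x]$. I want $\mathcal{D}$ to be an $\rightarrow$-redex, i.e.\ $\mathcal{D}_{1}$ to end with an $\rightarrow$-introduction. If it does not, then, its conclusion being an abstraction carrying an arrow type, it must end with an $\land$-elimination, and the topmost rule of the chain of $\land$-eliminations descending from the root of $\mathcal{D}_{1}$ can only be an $\land$-introduction (that is the only rule producing a conjunction-typed conclusion for an abstraction). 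Thus $\mathcal{D}_{1}$ contains an $\land$-redex which, through the $\land$-elimination congruence rule, gives a term-preserving $\rightsquigarrow$-step on $\mathcal{D}_{1}$ that strictly shrinks it; iterating, $\mathcal{D}_{1}\rightsquigarrow^{*}\mathcal{D}_{1}^{\dagger}$, where $\mathcal{D}_{1}^{\dagger}$ still derives $\lambda x\, u: A\rightarrow B$ and, by the argument of Proposition~\ref{prop:introduce}, ends with an $\rightarrow$-introduction. Then, via the function-part congruence rule, $\mathcal{D}$ reduces to the $\rightarrow$-redex obtained by replacing $\mathcal{D}_{1}$ with $\mathcal{D}_{1}^{\dagger}$, and a final contraction --- a valid typing derivation by the Typing Derivation Composition proposition --- produces a derivation of $u[v/x]=t'$ of type $B$; altogether $\mathcal{D}\rightsquigarrow^{+}\mathcal{D}'$.

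I expect the $\land$-introduction case to be the main obstacle. There $\mathcal{D}$ has premises $\mathcal{D}_{1}$ of $t: A_{1}$ and $\mathcal{D}_{2}$ of $t: A_{2}$; the induction hypothesis gives $\mathcal{D}_{1}\rightsquigarrow^{+}\mathcal{D}_{1}'$ of $t': A_{1}$ and $\mathcal{D}_{2}\rightsquigarrow^{+}\mathcal{D}_{2}'$ of $t': A_{2}$, and these two reductions must be merged into a reduction of the $\land$-introduction by means of its congruence rule --- the only rule applicable to a derivation ending with $\land$-introduction --- which reduces both immediate subderivations at the same time, reflecting the reading of $\rightsquigarrow$ as parallel Prawitz reduction. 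What has to be checked is that the two reductions coming from the induction hypothesis can indeed be supplied to this rule: they do end at derivations of the same term, since $t'$ is determined by $t$, and one must verify that their possibly different lengths are not an obstruction. Making the inductive statement mesh with this parallel rule --- and, relatedly, organizing the $\rightarrow$-elimination case above so that no $\land$-normalization is ever needed away from the head of $\mathcal{D}_{1}$ --- is where the care in this otherwise routine induction is concentrated.
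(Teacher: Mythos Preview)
Your plan --- induction on $\mathcal{D}$ with a case split on the last rule --- is exactly the paper's one-line argument, so the approach is right. However, you read $t\mapsto t'$ as weak head reduction; in this paper unqualified $\mapsto$ is arbitrary one-step $\beta$-reduction (the Strong Normalization corollary just below needs Subject Reduction for \emph{every} reduct, and the phrase ``by weak head reduction'' is always spelled out when that restriction is meant). The missing cases $t=\lambda x\,u$ with $u\mapsto u'$, and $t=f\,a$ with the contracted redex inside $a$ or properly inside $f$, are dispatched by the remaining congruence clauses of Table~\ref{tab:derred} and are routine.

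Your concern about the $\land$-introduction case is not mere bookkeeping: with $\rightsquigarrow$ literally as in Table~\ref{tab:derred} it is a real obstruction, because that congruence clause forces both immediate subderivations to take a \emph{single} step to derivations of the \emph{same} term. Take $t=(\lambda x\,x)\,y$; let $\mathcal{D}_{1}$ be the minimal derivation of $t:P$, and let $\mathcal{D}_{2}$ derive $t:Q$ by first forming $\lambda x\,x:(Q{\rightarrow}Q)\land(R{\rightarrow}R)$ via $\land$-introduction, then $\land$-eliminating to $Q\rightarrow Q$, then applying to $y:Q$. The only $\rightsquigarrow$-step from $\mathcal{D}_{1}$ fires the $\rightarrow$-redex (the subject becomes $y$); the only $\rightsquigarrow$-step from $\mathcal{D}_{2}$ fires the $\land$-redex (the subject stays $t$). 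Their $\land$-introduction is therefore $\rightsquigarrow$-normal even though $t\mapsto y$, so the statement fails as literally written. A mild repair --- letting the $\land$-introduction congruence combine $\rightsquigarrow^{+}$-reductions of possibly different lengths on the two branches, provided they land on derivations of the same term --- preserves strong normalisation of $\rightsquigarrow$ and makes the induction genuinely straightforward: the induction hypothesis on each branch then feeds directly into the relaxed clause.
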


\section{Strong Normalization}

As corollary of Subject Reduction we obtain strong normalization for system  $\D$.
\begin{theorem}[Strong Normalization]
Suppose that $\mathcal{D}$ is a  typing derivation  of $t: A$ in $\D$. Then $t$ is strongly normalizable.
\end{theorem}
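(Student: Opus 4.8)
The plan is to obtain strong normalization as a direct corollary of Subject Reduction, in the spirit of Retor\'e's observation that ``there is nothing to prove''. The only input we borrow from pure $\lambda$-calculus is the usual inductive characterization of the set $\sn$ of strongly normalizable terms, via its three closure properties: if $t_{1},\dots,t_{n}\in\sn$ then $x\, t_{1}\cdots t_{n}\in\sn$; if $t\in\sn$ then $\lambda x\, t\in\sn$; and if $v\in\sn$ and $u[v/x]\, t_{1}\cdots t_{n}\in\sn$ then $(\lambda x\, u)\, v\, t_{1}\cdots t_{n}\in\sn$. These are elementary and I would just cite them.

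Since $\rightsquigarrow$ is strongly normalizing (previous Proposition) and finitely branching, every typing derivation $\mathcal{D}$ has a well-defined height $h(\mathcal{D})\in\mathbb{N}$, the length of the longest $\rightsquigarrow$-reduction out of $\mathcal{D}$; and by the congruence clauses of Table~\ref{tab:derred} every reduction of a subderivation lifts to a reduction of $\mathcal{D}$ of the same length, so $h(\mathcal{D}_{0})\le h(\mathcal{D})$ whenever $\mathcal{D}_{0}$ is a subderivation of $\mathcal{D}$. I would then prove the theorem by well-founded induction on the pair $(h(\mathcal{D}),|\mathcal{D}|)$ ordered lexicographically, $|\mathcal{D}|$ being the number of nodes: the Subject-Reduction step strictly decreases $h$ (so the size is allowed to grow), whereas passing to a subderivation strictly decreases the size while keeping $h$ non-increasing.

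A preliminary lemma does the real work: because in $\D$ the only axiom is $\AxiomC{}\UnaryInfC{$x:A$}\DisplayProof$, a derivation of an application $s\, r$ must end with $\rightarrow$-elimination, $\land$-introduction or $\land$-elimination, and in all cases a routine induction on $\mathcal{D}$ extracts \emph{proper} subderivations of $\mathcal{D}$ typing $s$ and typing $r$; iterating along a spine $h\, s_{1}\cdots s_{n}$ one gets proper subderivations typing $h$ and each $s_{i}$. (This is exactly where the absence of $\top$ matters, and where the argument fails for $\DO$.) The main induction then splits on the shape of $t$. If $t=x\, t_{1}\cdots t_{n}$, the extracted subderivations type the $t_{i}$, the induction hypothesis gives $t_{i}\in\sn$, and the first closure property gives $t\in\sn$ (including $n=0$). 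If $t=\lambda x\, u$, the last rule is $\rightarrow$-introduction — apply the induction hypothesis to the immediate subderivation typing $u$, then the second closure property — or an $\land$-rule, whose immediate subderivation again types $t$, so the induction hypothesis applies. If $t=(\lambda x\, u)\, v\, t_{1}\cdots t_{n}$, the extracted subderivations and the induction hypothesis give $v,t_{1},\dots,t_{n}\in\sn$; moreover $t\mapsto u[v/x]\, t_{1}\cdots t_{n}$ by weak head reduction, so Subject Reduction yields $\mathcal{D}\rightsquigarrow^{+}\mathcal{D}'$ with $\mathcal{D}'$ typing $u[v/x]\, t_{1}\cdots t_{n}$ and $h(\mathcal{D}')<h(\mathcal{D})$; the induction hypothesis gives $u[v/x]\, t_{1}\cdots t_{n}\in\sn$, and the third closure property gives $t\in\sn$.

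The main obstacle, besides stating the pure-$\lambda$-calculus closure properties in their correct form, is arranging the induction so that the Subject-Reduction step is legitimate although $\mathcal{D}'$ need not be smaller in size — making $h(\mathcal{D})$ the dominant component of the measure is what resolves this. The second delicate point is the spine-extraction lemma together with the realization that it is precisely the feature separating $\D$ from $\DO$: in $\DO$ an application can be typed $\top$ with no information about its arguments, which is why this corollary delivers strong normalization only for $\D$.
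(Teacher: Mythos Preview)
Your lifting claim --- ``by the congruence clauses of Table~\ref{tab:derred} every reduction of a subderivation lifts to a reduction of $\mathcal{D}$ of the same length, so $h(\mathcal{D}_{0})\le h(\mathcal{D})$'' --- is false, and this breaks the lexicographic induction. The $\land$-introduction congruence clause requires \emph{both} premises to reduce in one step, and to derivations typing the \emph{same} term; a reduction available in one branch need not be matched in the other. Concretely, let $t_{1}=(\lambda y.\,y)\,z$ and take
\[
\mathcal{D}\;=\;
\AxiomC{$\mathcal{E}_{1}$}\noLine\UnaryInfC{$x\,t_{1}:B$}
\AxiomC{$\mathcal{E}_{2}$}\noLine\UnaryInfC{$x\,t_{1}:C$}
\BinaryInfC{$x\,t_{1}:B\land C$}\DisplayProof
\]
where $\mathcal{E}_{2}$ types $t_{1}:D$ directly by the obvious $\rightarrow$-redex (so $h(\mathcal{E}_{2})=1$), while $\mathcal{E}_{1}$ types $t_{1}:D$ via a detour through an $\land$-redex wrapping two copies of that same derivation (so the subderivation $\mathcal{F}_{1}$ of $t_{1}:D$ inside $\mathcal{E}_{1}$ has $h(\mathcal{F}_{1})=2$). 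To reduce $\mathcal{D}$ one must reduce both $\mathcal{E}_{1}$ and $\mathcal{E}_{2}$ to derivations typing the same term; the only common target is $x\,z$, and after that single step $\mathcal{E}_{2}$'s residual is normal, so $\mathcal{D}$ is stuck. Hence $h(\mathcal{D})=1<2=h(\mathcal{F}_{1})$, yet $\mathcal{F}_{1}$ is exactly the kind of subderivation your spine-extraction lemma may hand you for $t_{1}$. The induction hypothesis is then unavailable.

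The paper sidesteps all of this. It simply does induction on $h(\mathcal{D})$ alone and never passes to subderivations: for \emph{every} one-step reduct $t\mapsto t'$, Subject Reduction gives $\mathcal{D}\rightsquigarrow^{+}\mathcal{D}'$ typing $t'$, so $h(\mathcal{D}')<h(\mathcal{D})$ and the induction hypothesis yields $t'\in\sn$; since all one-step reducts of $t$ are strongly normalizable, so is $t$. No closure properties of $\sn$, no spine extraction, no secondary size component. Your argument can be repaired --- e.g.\ by measuring with the Prawitz-reduction height of the underlying natural deduction, where the $\land$-introduction congruences are independent and lifting does hold --- but the detour through the inductive characterization of $\sn$ is simply unnecessary here.
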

\begin{proof}
We proceed by induction on the longest reduction of $\mathcal{D}$.  If $t\mapsto t'$, then by Theorem \ref{thm:subject} we obtain 
$
\AxiomC{$\mathcal{D}$}
\noLine
\UnaryInfC{$t: A$}
\DisplayProof
\rightsquigarrow^{}
\AxiomC{$\mathcal{D}'$}
\noLine
\UnaryInfC{$t': A$}
\DisplayProof
$ 
By induction hypothesis, $t$ is strongly normalizable. Since $t$ reduces only to strongly normalizable terms, it is strongly normalizable. 
\end{proof}

\section{Normalization by Leftmost Redex Reduction}
Before proving normalization for system $\DO$, we need a standard fact about intersection types.
\begin{proposition}\label{prop:subtype}
Suppose that $\mathcal{D}$ is an $\land$-normal typing derivation  of $x\, t_{1}\dots t_{n}: A\neq \top$ in $\DO$ from $x_{1}: A_{1}, \dots, x_{m}: A_{m}$ and the last rule of $\mathcal{D}$ is not an $\land$-introduction. Then $A$ is  a subformula of one among $A_{1}, \dots, A_{m}$.
\end{proposition}
\begin{proof}
By induction on $\mathcal{D}$ and by cases according to the last rule of $\mathcal{D}$. We must consider only the following cases.
\begin{itemize}
\item $n=0$ and $\mathcal{D}= 
\AxiomC{}
\UnaryInfC{$x: A_{i}$}
\DisplayProof$, with $x=x_{i}$ and  $A=A_{i}$. Then thesis is verified. 
\item $\mathcal{D}=
\AxiomC{$\mathcal{E}$}
\noLine
\UnaryInfC{$x\, t_{1}\dots t_{n-1}: B\rightarrow A$}
\AxiomC{$\mathcal{F}$}
\noLine
\UnaryInfC{$t_{n}: B$}
\BinaryInfC{$x\, t_{1}\dots t_{n-1}\, t_{n}: A$}\DisplayProof$. The last rule of $\mathcal{E}$ is not an $\land$-introduction, therefore by induction hypothesis $B\rightarrow A$ must be a  subformula of one among $A_{1}, \dots, A_{m}$, thus $A$ satisfies the thesis.
\item $\mathcal{D}=
\AxiomC{$\mathcal{E}$}
\noLine
\UnaryInfC{$x\, t_{1}\dots t_{n}: B_{1}\land B_{2}$}
\UnaryInfC{$x\, t_{1}\dots t_{n}: B_{i}$}
\DisplayProof$, with $A=B_i$. Since $\mathcal{D}$ is $\land$-normal, the last rule of $\mathcal{E}$ is not an $\land$-introduction, therefore by induction hypothesis $B_{1}\land B_{2}$ must be a  subformula of one among $A_{1}, \dots, A_{m}$, thus $B_{i}$ satisfies the thesis.

\end{itemize}

\end{proof}

We now prove the version of the  Subject Reduction that we  need for system $\DO$:  by contracting the leftmost redex of a typable $\lambda$-term $t$, we  induce a reduction of its typing derivation, provided $t$ is typable without $\top$. Intuitively, the leftmost redex cannot be inside a subterm of $t$ having type $\top$, the only case in which  we would not have any transformation of the natural deduction associated to the typing derivation.

\begin{lemma}[On the Left!]\label{lem:step}
Suppose that $\mathcal{D}$ is a $\land$-normal typing derivation  of $t: A\neq \top$ in $\DO$ from $x_{1}: A_{1}, \dots, x_{n}: A_{n}$. 
Then:
\begin{enumerate}
\item If the last rule of $\mathcal{D}$ is not a $\land$-introduction and   
 $t\mapsto t'$ by weak head reduction, then
$$
\AxiomC{$\mathcal{D}$}
\noLine
\UnaryInfC{$t: A$}
\DisplayProof
\rightsquigarrow^{+}
\AxiomC{$\mathcal{D}'$}
\noLine
\UnaryInfC{$t': A$}
\DisplayProof
$$

\item If $A_{1}, \ldots, A_{n}, A$ do not contain $\top$ and $t\mapsto t'$ by leftmost redex reduction, then 
$$
\AxiomC{$\mathcal{D}$}
\noLine
\UnaryInfC{$t: A$}
\DisplayProof
\rightsquigarrow^{+}
\AxiomC{$\mathcal{D}'$}
\noLine
\UnaryInfC{$t': A$}
\DisplayProof
$$

\end{enumerate}
\end{lemma}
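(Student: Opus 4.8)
The plan is to prove both statements simultaneously by a single induction on $\mathcal{D}$, with cases determined by the last rule of $\mathcal{D}$. The two statements share almost all their structure: in both, the real work is to follow the weak head / leftmost redex down through the typing derivation, apply the induction hypothesis on the relevant subderivation, and then use the congruence clauses of $\rightsquigarrow$ in Table~\ref{tab:derred} to lift the subderivation reduction to the whole derivation. The extra hypothesis in part~(2) — that $A_1,\dots,A_n,A$ contain no $\top$ — is precisely what guarantees that the leftmost redex is not hidden inside a subterm typed only by $\top$, so that we can always find a subderivation to recurse on.

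First I would handle part~(1). Since $\mathcal{D}$ is $\land$-normal and its last rule is not an $\land$-introduction, and since $t \mapsto t'$ by weak head reduction means $t = (\lambda x\, u)\, v\, t_1\cdots t_k$, the term $t$ is an application (or, if $k \geq 1$, iterated application); so the last rule of $\mathcal{D}$ must be an $\rightarrow$-elimination, $\rightarrow$-introduction, or $\land$-elimination — but the latter is excluded by $\land$-normality together with Proposition~\ref{prop:introduce}-style reasoning applied to subderivations. If $k = 0$, so $t = (\lambda x\, u)\, v$, the last rule is $\rightarrow$-elimination with a left premise typing $\lambda x\, u : C \rightarrow A$; I would $\land$-normalize that left subderivation if needed and invoke Proposition~\ref{prop:introduce} to conclude it ends with an $\rightarrow$-introduction, exhibiting the left subderivation as the left side of an $\rightarrow$-redex in Table~\ref{tab:derred}; one application of the first reduction of Table~\ref{tab:derred} (the contraction) gives the desired $\mathcal{D}'$, and composing with any $\land$-normalization steps keeps us in $\rightsquigarrow^{+}$. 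If $k \geq 1$, the last rule is still $\rightarrow$-elimination but now the left subderivation types $(\lambda x\, u)\, v\, t_1 \cdots t_{k-1}$, which weak-head-reduces as well; I would apply the induction hypothesis (part~1) to that subderivation after $\land$-normalizing it, then use the congruence clause of Table~\ref{tab:derred} that propagates a reduction under the left premise of an $\rightarrow$-elimination. A subtlety: $\land$-normalizing a subderivation is itself a sequence of $\rightsquigarrow$ steps, so all the bookkeeping stays inside $\rightsquigarrow^{+}$; I would state this as a small auxiliary remark.

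For part~(2), I would again split on the last rule of $\mathcal{D}$. If the last rule is an $\land$-introduction, say concluding $t : A_{1}' \land A_{2}'$ from $t : A_1'$ and $t : A_2'$, then since $A = A_1' \land A_2'$ contains no $\top$, neither do $A_1'$ nor $A_2'$; the leftmost redex of $t$ is the same in both premises, so I apply the induction hypothesis to whichever premise subderivation I like after $\land$-normalizing it (note a subderivation of an $\land$-normal derivation need not itself be $\land$-normal only if — actually it is, so no normalization is needed here), obtaining a reduction of one premise; then the congruence clause for $\land$-introduction that reduces one premise (second-to-last clause of Table~\ref{tab:derred}) — wait, that clause reduces \emph{both} premises in lockstep, so instead I would first reduce the left premise using the left-premise congruence of $\land$-introduction, then separately the right: two applications, still $\rightsquigarrow^{+}$. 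If the last rule is not an $\land$-introduction: when $t$ is itself the leftmost redex (i.e. weak head reduction applies), part~(1) finishes the job directly. Otherwise $t = x\, t_1 \cdots t_n$ or $t = \lambda x\, u$ with the redex strictly inside. In the $\lambda$-case the last rule is $\rightarrow$-introduction (by Proposition~\ref{prop:introduce}), the redex is in the body $u$, and the hypothesis that $A = C \rightarrow D$ has no $\top$ gives that $D$ has no $\top$ and the assumption $x : C$ added has no $\top$; apply the induction hypothesis to the body subderivation, then the $\rightarrow$-introduction congruence clause. In the head-variable case $t = x\, t_1 \cdots t_n$ with the leftmost redex inside some $t_j$: the last rule is $\rightarrow$-elimination; here is where the $\top$-freeness is essential — Proposition~\ref{prop:subtype} and an analysis of the spine of $\rightarrow$-eliminations show that each argument $t_i$ is typed in $\mathcal{D}$ by a subformula of one of the $A_k$, hence by a $\top$-free type, so none of the arguments is "invisible." The leftmost redex being in $t_j$ means (after checking that $t_1,\dots,t_{j-1}$ are normal) that the subderivation typing $t_j$ has a leftmost redex; apply the induction hypothesis to it and propagate by the appropriate congruence clause (left-premise propagation through the $\rightarrow$-eliminations for $t_{j+1},\dots,t_n$, and right-premise propagation at the $j$-th $\rightarrow$-elimination).

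The main obstacle I expect is the head-variable case of part~(2): making precise why the leftmost redex of $x\, t_1 \cdots t_n$ induces a leftmost redex in exactly one argument subderivation, and — the genuinely delicate point — why that argument is typed by a $\top$-free type so that the induction hypothesis applies. This is exactly the place where Proposition~\ref{prop:subtype} earns its keep: it must be combined with a spine-decomposition of the $\rightarrow$-elimination derivation (possibly after $\land$-normalizing intermediate subderivations, which may break $\land$-normality of sub-subderivations and force re-normalization) to certify that all the argument types along the spine are subformulas of $\top$-free assumption types. Getting the inductive statement and the $\land$-normalization bookkeeping to line up cleanly so that this goes through without circularity is the crux; everything else is a routine match against the congruence clauses of Table~\ref{tab:derred}.
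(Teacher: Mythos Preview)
Your overall strategy---a simultaneous induction on $\mathcal{D}$ by cases on the last rule, using Proposition~\ref{prop:introduce} to expose $\rightarrow$-redexes and Proposition~\ref{prop:subtype} to guarantee $\top$-freeness along a head-variable spine---is exactly the paper's approach. But there is a genuine gap.

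You claim that when $t$ is an application the last rule cannot be an $\land$-elimination, ``excluded by $\land$-normality together with Proposition~\ref{prop:introduce}-style reasoning.'' That is false: Proposition~\ref{prop:introduce} only applies when the subject is a $\lambda$-abstraction. Nothing prevents, say, $t=x\,t_{1}\cdots t_{m}$ (or $t=(\lambda x\,u)\,v\,t_{1}\cdots t_{k}$) from being typed by an $\land$-elimination whose premise $\mathcal{E}$ types $t:B_{1}\land B_{2}$ via an $\rightarrow$-elimination; $\land$-normality only forbids $\mathcal{E}$ from ending in an $\land$-\emph{introduction}. Consequently you are missing an entire case in both parts. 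In the paper this case is handled explicitly: for part~(1) one simply applies the induction hypothesis (part~1) to $\mathcal{E}$ and propagates via the $\land$-elimination congruence clause; for part~(2), when the reduction is not weak-head, one first argues (using Proposition~\ref{prop:introduce}) that $t$ cannot begin with $\lambda$, so $t=x\,t_{1}\cdots t_{m}$, and then applies Proposition~\ref{prop:subtype} to $\mathcal{E}$ to conclude that $B_{1}\land B_{2}$ is a subformula of some $A_{i}$ and hence $\top$-free, which licenses the induction hypothesis on $\mathcal{E}$. Without this case your induction does not close.

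Two smaller points. First, the $\land$-introduction congruence clause in Table~\ref{tab:derred} reduces \emph{both} premises in lockstep to derivations typing the \emph{same} term $t'$; there is no one-premise-at-a-time rule, so your proposed ``reduce the left, then separately the right'' fix does not exist. The paper's move is to apply the induction hypothesis to \emph{both} premises (they type the same $t$, with the same leftmost redex) and combine. Second, you repeatedly propose to $\land$-normalize subderivations before recursing; this is unnecessary, since any subderivation of a $\land$-normal derivation is already $\land$-normal.
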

\begin{proof}

We prove 1. and 2. simultaneously by induction on the size of $\mathcal{D}$ and by cases according to the last rule of $\mathcal{D}$. 
\begin{itemize}
\item $\mathcal{D}= \AxiomC{}
\UnaryInfC{$x: A$}
\DisplayProof$. This case is not possible, since $x$ does not reduce to any term.\\
\item $\mathcal{D}= \AxiomC{}
\UnaryInfC{$t: \top$}
\DisplayProof$, with $A=\top$. This case is not possible, since $A\neq\top$ by hypothesis.\\
\item 
$\mathcal{D}=
\AxiomC{$x: B$}
\noLine
\UnaryInfC{$\mathcal{D}'$}
\noLine
\UnaryInfC{$u:C$}
\UnaryInfC{$\lambda x\, u: B\rightarrow C$}
\DisplayProof$
with $t=\lambda x\, u$, $t'=\lambda x\, u'$, $u\mapsto u'$ and $A=B\rightarrow C$. We observe that 1. is trivially true, since $t$ reduces to nothing by weak head reduction.
Moving on to 2., we can apply the  induction hypothesis to 
$\AxiomC{$\mathcal{D}'$}
\noLine
\UnaryInfC{$u: C$}
\DisplayProof$, since $B$ does not contain $\top$ according to the hypotheses. We thus obtain 
$$
\AxiomC{$\mathcal{D}'$}
\noLine
\UnaryInfC{$u: C$}
\DisplayProof
\rightsquigarrow^{+}
\AxiomC{$\mathcal{D}''$}
\noLine
\UnaryInfC{$u': C$}
\DisplayProof
$$
Therefore, 
$$\mathcal{D}=
\AxiomC{$x: B$}
\noLine
\UnaryInfC{$\mathcal{D}'$}
\noLine
\UnaryInfC{$u: C$}
\UnaryInfC{$\lambda x\, u: B\rightarrow C$}
\DisplayProof
\rightsquigarrow^{+}
\AxiomC{$x: B$}
\noLine
\UnaryInfC{$\mathcal{D}''$}
\noLine
\UnaryInfC{$u': C$}
\UnaryInfC{$\lambda x\, u': B\rightarrow C$}
\DisplayProof$$
\item $\mathcal{D}=
\AxiomC{$\mathcal{E}$}
\noLine
\UnaryInfC{$u: B\rightarrow A$}
\AxiomC{$\mathcal{F}$}
\noLine
\UnaryInfC{$v: B$}
\BinaryInfC{$u\, v: A$}\DisplayProof$, with $t= u\, v$. 

If $t\mapsto t'$ by weak head reduction, either $u\mapsto u'$ by weak head reduction and $t'=u'\, v$, or $u=\lambda x\, w$ and $t'= w[v/x]$. In the first case,  the last rule of $\mathcal{E}$ cannot be an $\land$-introduction, so by induction hypothesis 1. we obtain
$$
\AxiomC{$\mathcal{E}$}
\noLine
\UnaryInfC{$u: B\rightarrow A$}
\DisplayProof
\rightsquigarrow^{+}
\AxiomC{$\mathcal{E}'$}
\noLine
\UnaryInfC{$u': B\rightarrow A$}
\DisplayProof
$$
therefore,
$$
\AxiomC{$\mathcal{E}$}
\noLine
\UnaryInfC{$u: B\rightarrow A$}
\AxiomC{$\mathcal{F}$}
\noLine
\UnaryInfC{$v: B$}
\BinaryInfC{$u\, v: A$}\DisplayProof
\rightsquigarrow^{+}
\AxiomC{$\mathcal{E}'$}
\noLine
\UnaryInfC{$u': B\rightarrow A$}
\AxiomC{$\mathcal{F}$}
\noLine
\UnaryInfC{$v: B$}
\BinaryInfC{$u'\, v: A$}\DisplayProof$$
which proves 1. and 2. In the second case, since $\mathcal{E}$ is by hypothesis $\land$-normal, by Proposition \ref{prop:introduce}, 
$$\mathcal{E}=
\AxiomC{$x: B$}
\noLine
\UnaryInfC{$\mathcal{E}'$}
\noLine
\UnaryInfC{$w:A$}
\UnaryInfC{$\lambda x\, w: B\rightarrow A$}
\DisplayProof
$$
Therefore,
$$
\AxiomC{$\mathcal{E}$}
\noLine
\UnaryInfC{$u: B\rightarrow A$}
\AxiomC{$\mathcal{F}$}
\noLine
\UnaryInfC{$v: B$}
\BinaryInfC{$u\, v: A$}\DisplayProof
=
\AxiomC{$x: B$}
\noLine
\UnaryInfC{$\mathcal{E}'$}
\noLine
\UnaryInfC{$w: A$}
\UnaryInfC{$\lambda x\, w: B\rightarrow A$}
\AxiomC{$\mathcal{F}$}
\noLine
\UnaryInfC{$v: B$}
\BinaryInfC{$(\lambda x\, w)\, v: A$}\DisplayProof
\rightsquigarrow^{+}
\AxiomC{$\mathcal{F}$}
\noLine
\UnaryInfC{$v: B$}
\noLine
\UnaryInfC{$\mathcal{E}'[v/x]$}
\noLine
\UnaryInfC{$w[v/x]:A$}
\DisplayProof
$$
which proves 1. and 2. 

We can now assume that $t\mapsto t'$ not by weak head reduction, thus we are left to prove 2. Since, $t$ has no head redex, $t=x\, t_{1}\dots t_{m}$, with $u=x\, t_{1}\dots t_{m-1}$ and $v=t_{m}$.  By Proposition \ref{prop:subtype}, $B\rightarrow A$ must be a  subformula  of some $A_{i}$. Therefore $B\rightarrow A$ and $B$ do not contain $\top$. Since $t'= u'\, v$, with $u\mapsto u'$, or $t'= u\, v'$, with $v\mapsto v'$, by induction hypothesis  respectively 
$$
\AxiomC{$\mathcal{E}$}
\noLine
\UnaryInfC{$u: B\rightarrow A$}
\DisplayProof
\rightsquigarrow^{+}
\AxiomC{$\mathcal{E}'$}
\noLine
\UnaryInfC{$u': B\rightarrow A$}
\DisplayProof
$$
or 
$$
\AxiomC{$\mathcal{F}$}
\noLine
\UnaryInfC{$v: B$}
\DisplayProof
\rightsquigarrow^{+}
\AxiomC{$\mathcal{F}'$}
\noLine
\UnaryInfC{$v': B$}
\DisplayProof
$$
Therefore,
$$
\AxiomC{$\mathcal{E}$}
\noLine
\UnaryInfC{$u: B\rightarrow A$}
\AxiomC{$\mathcal{F}$}
\noLine
\UnaryInfC{$v: B$}
\BinaryInfC{$u\, v: A$}\DisplayProof
\rightsquigarrow^{+}
\AxiomC{$\mathcal{E}'$}
\noLine
\UnaryInfC{$u': B\rightarrow A$}
\AxiomC{$\mathcal{F}$}
\noLine
\UnaryInfC{$v: B$}
\BinaryInfC{$u'\, v: A$}\DisplayProof$$
or 
$$
\AxiomC{$\mathcal{E}$}
\noLine
\UnaryInfC{$u: B\rightarrow A$}
\AxiomC{$\mathcal{F}$}
\noLine
\UnaryInfC{$v: B$}
\BinaryInfC{$u\, v: A$}\DisplayProof
\rightsquigarrow^{+}
\AxiomC{$\mathcal{E}$}
\noLine
\UnaryInfC{$u: B\rightarrow A$}
\AxiomC{$\mathcal{F}'$}
\noLine
\UnaryInfC{$v': B$}
\BinaryInfC{$u\, v': A$}\DisplayProof$$
which is the thesis.

\item $\mathcal{D}=
\AxiomC{$\mathcal{E}$}
\noLine
\UnaryInfC{$t: C$}
\AxiomC{$\mathcal{F}$}
\noLine
\UnaryInfC{$t: B$}
\BinaryInfC{$t: C\land B$}
\DisplayProof$, with $A=C\land B$.  By hypothesis on $\mathcal{D}$, 1. is trivially true, we thus prove 2. Since $C\land B$ does not contain $\top$, also $C$ and $B$ do not contain $\top$. Therefore by induction hypothesis 2., we get
$$
\AxiomC{$\mathcal{E}$}
\noLine
\UnaryInfC{$t: C$}
\DisplayProof
\rightsquigarrow^{+}
\AxiomC{$\mathcal{E}'$}
\noLine
\UnaryInfC{$t': C$}
\DisplayProof
$$ 
$$
\AxiomC{$\mathcal{F}$}
\noLine
\UnaryInfC{$t: B$}
\DisplayProof
\rightsquigarrow^{+}
\AxiomC{$\mathcal{F}'$}
\noLine
\UnaryInfC{$t': B$}
\DisplayProof
$$
thus
$$\AxiomC{$\mathcal{E}$}
\noLine
\UnaryInfC{$t: C$}
\AxiomC{$\mathcal{F}$}
\noLine
\UnaryInfC{$t: B$}
\BinaryInfC{$t: C\land B$}
\DisplayProof
\rightsquigarrow^{+}
\AxiomC{$\mathcal{E}'$}
\noLine
\UnaryInfC{$t': C$}
\AxiomC{$\mathcal{F}'$}
\noLine
\UnaryInfC{$t': B$}
\BinaryInfC{$t': C\land B$}
\DisplayProof$$
which proves 2.

\item $\mathcal{D}=
\AxiomC{$\mathcal{E}$}
\noLine
\UnaryInfC{$t: B_{1}\land B_{2}$}
\UnaryInfC{$t: B_{i}$}
\DisplayProof$, with $A=B_{i}$. We first observe that $t$ cannot start with $\lambda$, otherwise, since $\mathcal{D}$ is $\land$-normal and thus the last rule of $\mathcal{E}$ cannot be an $\land$-introduction,  by  Proposition \ref{prop:introduce} we would obtain that the last rule of $\mathcal{E}$ is a $\rightarrow$-introduction.

Now, if $t\mapsto t'$ by weak head reduction, then by induction hypothesis 1., we get 
$$
\AxiomC{$\mathcal{E}$}
\noLine
\UnaryInfC{$t: B_{1}\land B_{2}$}
\DisplayProof
\rightsquigarrow^{+}
\AxiomC{$\mathcal{E}'$}
\noLine
\UnaryInfC{$t': B_{1}\land B_{2}$}
\DisplayProof
$$ 
thus
$$\AxiomC{$\mathcal{E}$}
\noLine
\UnaryInfC{$t: B_{1}\land B_{2}$}
\UnaryInfC{$t: B_{i}$}
\DisplayProof
\rightsquigarrow^{+}
\AxiomC{$\mathcal{E}'$}
\noLine
\UnaryInfC{$t': B_{1}\land B_{2}$}
\UnaryInfC{$t': B_{i}$}
\DisplayProof
$$
which proves 1. and 2.
Therefore, we can assume that $t\mapsto t'$ not by weak head reduction and we are left to prove 2. Since $t$ does not start with $\lambda$, it  has no head redex, thus $t= x\, t_{1}\dots t_{m}$. By Proposition \ref{prop:subtype}, applied to $\mathcal{E}$, we obtain that $B_{1}\land B_{2}$ is a subformula of some among $A_{1}, \dots, A_{n}$, hence $B_{1}\land B_{2}$ cannot contain $\top$. By induction hypothesis, 
$$
\AxiomC{$\mathcal{E}$}
\noLine
\UnaryInfC{$t: B_{1}\land B_{2}$}
\DisplayProof
\rightsquigarrow^{+}
\AxiomC{$\mathcal{E}'$}
\noLine
\UnaryInfC{$t': B_{1}\land B_{2}$}
\DisplayProof
$$ 
and we obtain the thesis.
\end{itemize}
\end{proof}
We now prove that every term typable in $\DO$ without $\top$ is normalizable by leftmost redex reduction. The natural deduction proof sheds new light on this fundamental result. Every reduction step contracting the leftmost redex is actually a combination of reduction steps at the level of the natural deduction corresponding to the typing derivation. When this natural deduction reaches normal form, the term is in normal form. We also remark that the subformula property must hold.  Since the term is typable without $\top$, the normal derivation is actually a derivation in system $\D$! This means that  the subterms having type $\top$ are systematically erased.

\begin{theorem}[Normalization by Leftmost Redex Reduction]\label{thm:norm}
Suppose that $\mathcal{D}$ is a  typing derivation  of $t: A$ in $\DO$ from $x_{1}: A_{1}, \dots, x_{n}: A_{n}$ 
such that $A_{1}, \ldots, A_{n}, A$ do not contain $\top$. Then the leftmost redex reduction of $t$ terminates.
\end{theorem}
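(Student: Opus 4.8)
The plan is to deduce this from the ``On the Left!'' lemma (Lemma \ref{lem:step}) combined with the strong normalization of $\rightsquigarrow$ (the proposition preceding Proposition \ref{prop:introduce}), in the same spirit as the proof of Strong Normalization from Subject Reduction. The first step is a reduction to the $\land$-normal case: given an arbitrary typing derivation $\mathcal{D}$ of $t:A$ from $x_1:A_1,\dots,x_n:A_n$, repeatedly contract $\land$-redexes; since $\rightsquigarrow$ is strongly normalizing, this terminates in a $\land$-normal derivation of the same $t:A$ from the same hypotheses (the $\land$-redex reduction step does not change the typed term, and the contexts only shrink). So without loss of generality $\mathcal{D}$ is $\land$-normal.

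Next I would set up the induction. We argue by induction on the length of the longest $\rightsquigarrow$-reduction of $\mathcal{D}$, which is finite by strong normalization of $\rightsquigarrow$. If $t$ is in normal form, the leftmost redex reduction terminates immediately. Otherwise $t$ contains a redex, and the leftmost redex reduction performs $t\mapsto t'$. Here there is a subtlety: the last rule of $\mathcal{D}$ might be a $\land$-introduction, in which case Lemma \ref{lem:step} does not directly apply to $\mathcal{D}$ as stated. But $A$ does not contain $\top$, so if the last rule is a $\land$-introduction we have $A = C \land B$, and the two immediate subderivations $\mathcal{E}, \mathcal{F}$ both type $t$ with types $C$, $B$ not containing $\top$; iterating, we reach subderivations of $t$ whose last rule is not a $\land$-introduction and whose conclusion type is $\top$-free. (Alternatively, one observes directly that the case analysis in Lemma \ref{lem:step} already handles $\land$-introduction by descending into the premises, so the conclusion $\mathcal{D}\rightsquigarrow^{+}\mathcal{D}'$ with $\mathcal{D}'$ typing $t':A$ holds without the restriction on the last rule; this is worth checking against the lemma's proof.) In any case, Lemma \ref{lem:step}(2) gives $\mathcal{D}\rightsquigarrow^{+}\mathcal{D}'$ where $\mathcal{D}'$ is a typing derivation of $t':A$ from the same (still $\top$-free) hypotheses. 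We then $\land$-normalize $\mathcal{D}'$ as before, obtaining a $\land$-normal $\mathcal{D}''$ of $t':A$ from $\top$-free hypotheses, and since $\mathcal{D}\rightsquigarrow^{+}\mathcal{D}'\rightsquigarrow^{*}\mathcal{D}''$ with at least one step, the longest $\rightsquigarrow$-reduction of $\mathcal{D}''$ is strictly shorter than that of $\mathcal{D}$. By the induction hypothesis, the leftmost redex reduction of $t'$ terminates; prepending the step $t\mapsto t'$, the leftmost redex reduction of $t$ terminates.

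The main obstacle I expect is the bookkeeping around the last-rule-is-$\land$-introduction case and the interaction between $\land$-normalization and the measure: one must make sure that $\land$-normalizing does not increase the longest $\rightsquigarrow$-reduction length (it cannot, since $\land$-reduction steps are themselves $\rightsquigarrow$-steps, so any reduction out of $\mathcal{D}''$ extends to one out of $\mathcal{D}'$) and that the hypotheses stay $\top$-free throughout (they do, since no rule introduces new free-variable declarations with larger types). A secondary point to get right is that the leftmost redex of $t$ is indeed the redex contracted, i.e.\ that Lemma \ref{lem:step}(2) is invoked with ``leftmost redex reduction'' and not merely ``some reduction''; this is exactly what the lemma provides, so no extra work is needed there. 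Everything else is a routine assembly of the two cited results.
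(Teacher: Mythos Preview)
Your proposal is correct and follows essentially the same route as the paper: induct on the length of the longest $\rightsquigarrow$-reduction of $\mathcal{D}$, $\land$-normalize, apply Lemma~\ref{lem:step}(2) to the leftmost step, and invoke the induction hypothesis. Two minor remarks: your worry about the last rule being a $\land$-introduction is unnecessary, since part~2 of Lemma~\ref{lem:step} carries no such restriction (only part~1 does); and the paper performs the $\land$-normalization \emph{inside} the induction step rather than assuming it up front, which spares you the extra $\land$-normalization of $\mathcal{D}'$ before applying the induction hypothesis.
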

\begin{proof}
We proceed by induction on the longest reduction of $\mathcal{D}$. We have that $
\AxiomC{$\mathcal{D}$}
\noLine
\UnaryInfC{$t: A$}
\DisplayProof
\rightsquigarrow^{*}
\AxiomC{$\mathcal{D}'$}
\noLine
\UnaryInfC{$t: A$}
\DisplayProof
$, where $\mathcal{D}'$ is $\land$-normal. If $t$ is normal we are done. If $t\mapsto t'$ by leftmost redex reduction, by Lemma \ref{lem:step} we obtain 
$
\AxiomC{$\mathcal{D}'$}
\noLine
\UnaryInfC{$t: A$}
\DisplayProof
\rightsquigarrow^{+}
\AxiomC{$\mathcal{D}''$}
\noLine
\UnaryInfC{$t': A$}
\DisplayProof
$ 
By induction hypothesis, the leftmost redex reduction of $t$ terminates, which yields the thesis.
\end{proof}

\end{document}